    \newtheorem{lemma}{Lemma}
    \newtheorem{theorem}{Theorem}
\newcommand{\term}[1]{\textbf{#1}}
\newcommand{\model}[1]{\ensuremath{\mathbb{M}_{\text{#1}}}}
\newcommand{\modelone}[1]{\ensuremath{\mathbb{M}^1_{\text{#1}}}}
\newcommand{\true}{\ensuremath{\mathcal{T}}}
\newcommand{\pred}{\ensuremath{\mathcal{C}}}
\newcommand{\graph}{\ensuremath{G}}
\newcommand{\loss}{\ensuremath{\mathcal{L}}}
\DeclareMathOperator{\AMI}{AMI}
\DeclareMathOperator{\cNMI}{cNMI}
\DeclareMathOperator{\NMI}{NMI}
\DeclareMathOperator{\Expect}{\mathbb{E}}
\DeclareMathAlphabet{\mathcal}{OMS}{cmsy}{m}{n}
\DeclareMathAlphabet{\mathbb}{U}{msb}{m}{n}
\def\mathcolor#1#{\@mathcolor{#1}}
\def\@mathcolor#1#2#3{%
  \protect\leavevmode
  \begingroup
    \color#1{#2}#3%
  \endgroup
}
\def\mathcolor#1#{\@mathcolor{#1}}
\def\@mathcolor#1#2#3{%
  \protect\leavevmode
  \begingroup
    \color#1{#2}#3%
  \endgroup
}
\begin{document}
\title{An exact No Free Lunch theorem for \\ community detection}
\author{Arya D. McCarthy, Tongfei Chen, \& Seth Ebner \\
Johns Hopkins University\\
}
\date{}
\maketitle

\begin{abstract} 
%
%
\small
A precondition for a No Free Lunch theorem is evaluation with a loss function which does not assume \emph{a priori} superiority of some outputs over others. 
A previous result for community detection by \citet{peel2017ground} relies on a mismatch between the loss function and the problem domain. The loss function computes an expectation over only a subset of the universe of possible outputs; thus, it is only \emph{asymptotically} appropriate with respect to the problem size.
By using the correct random model for the problem domain, we provide a stronger, exact No Free Lunch theorem for community detection. The claim generalizes to other set-partitioning tasks including core--periphery separation, \(k\)-clustering, and graph partitioning.
Finally, we review the literature of proposed evaluation functions and identify functions which (perhaps with slight modifications) are compatible with an exact No Free Lunch theorem. 

\end{abstract}

\section{Introduction}

A myriad of tasks in machine learning and network science involve discovering structure in data. Especially as we process graphs with millions of nodes, analysis of individual nodes is untenable, while global properties of the graph ignore local details. It becomes critical to find an intermediate level of complexity, whether it be communities, cores and peripheries, or other structures. Points in metric space and nodes of graphs can be clustered, and hubs identified, using algorithms from network science. 
A longstanding theoretical question in machine learning has been whether an \say{ultimate} clustering algorithm is a possibility or merely a fool's errand.

Largely,
the question was addressed by \citet{wolpert1996lack} as a \term{No Free Lunch theorem}, a claim about the limitations of algorithms with respect to their problem domain. When an appropriate function is chosen to quantify the error (or \term{loss}), no algorithm can be superior to any other: an improvement across one subset of the problem domain is balanced by diminished performance on another subset. This is jarring at first. Are we not striving to find the best algorithms for our tasks? Yes---but by making specific assumptions about the subset of problems we expect to encounter, we can be comfortable tailoring our algorithms to those problems and sacrificing performance on remote cases.

\begin{figure}
\centering
	\begin{subfigure}[b]{0.48\linewidth}
	\includegraphics[width=\linewidth]{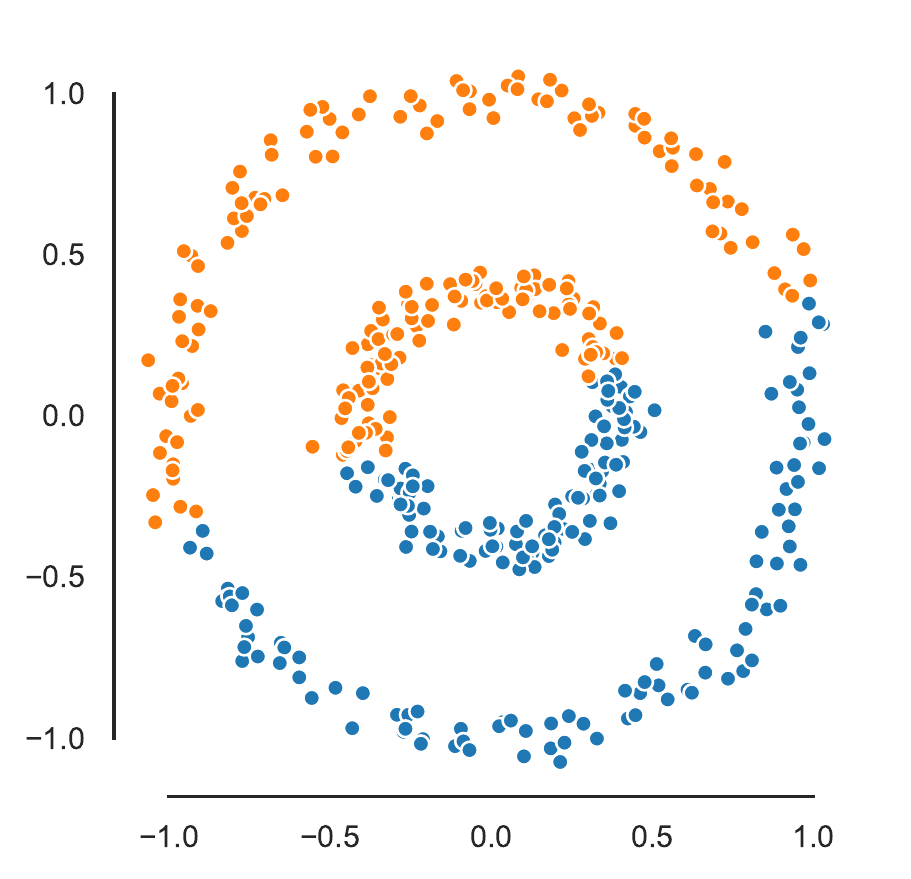}
	\caption{Non-spherical clusters}
	\end{subfigure}%
	~%
	\begin{subfigure}[b]{0.48\linewidth}
	\includegraphics[width=\linewidth]{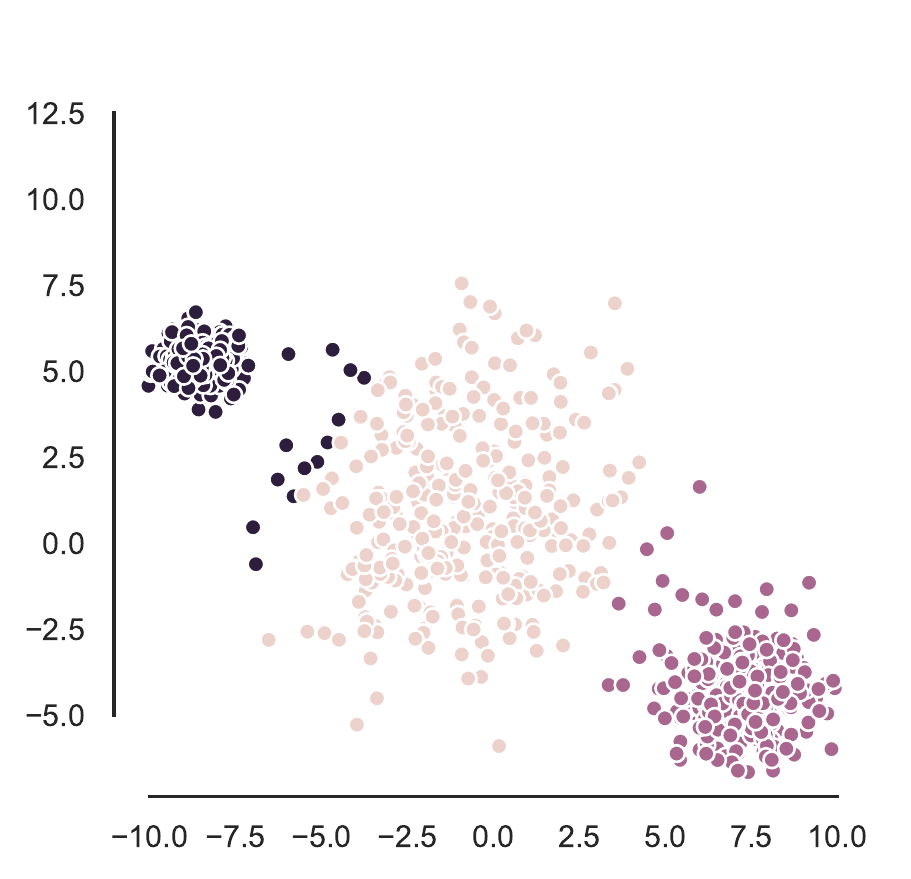}
	\caption{Unequal variances}
	\end{subfigure}

\caption{\(k\)-means clustering when certain assumptions are violated.}
\label{fig:means}
\end{figure}

As an example, the \(k\)-means algorithm for \(k\)-clustering is widely used for its simplicity and strength, but it assumes spherical clusters, equal variance in those clusters, and similar cluster sizes (equivalent to a homoscedastic Gaussian prior). \autoref{fig:means} shows the degraded performance on problems where these assumptions are violated.

To prove a No Free Lunch theorem for a particular task demands an appropriate loss function. A No Free Lunch theorem was argued for community detection \citep{peel2017ground}, using the adjusted mutual information function \citep{vinh2009information}.\footnote{Throughout this work, we assume that we evaluate against a known ground truth, as opposed to some intrinsic measure of partition properties like modularity \citep{newman2004modularity}.} However, the theorem is inexact. A No Free Lunch theorem relies on a loss function which imparts \term{generalizer-independence} (formally defined below): one which does not assume \emph{a priori} that some prediction is superior to another. The loss function used in the proof is only \emph{asymptotically} independent in the size of the input. We present a correction: by substituting an appropriate loss function, we are able to claim an exact version of the No Free Lunch theorem for community detection. The result generalizes to other set-partitioning tasks when evaluated with this loss function, including clustering, \(k\)-clustering, and graph partitioning.

\section{Background}

\subsection{Community detection}

A number of tasks on graphs seek a partition of the graph's nodes that maximizes a score function. Situated between the microscopic node-level and the macroscopic graph-level, these partitions form a \term{mesoscopic structure}---be it a core--periphery separation, a graph coloring, or our focus: \term{community detection} (CD). Community detection has been historically ill-defined \citep{radicchi2004defining, yang2016comparative}, though the intuition is to collect nodes with high interconnectivity
(or edge density) into communities with low edge density between them. The task is analogous to clustering, in which points near one another in a metric space are grouped together.

To assess whether the formulation of community detection matches one's needs, one performs extrinsic evaluation against a known \term{ground truth} clustering.
This ground truth can come from domain knowledge of real-world graphs or can be planted into a graph as a synthetic benchmark.
After running community detection on the graph, some similarity or error measure between the computed community structure and the correct one can be computed.

\paragraph{No bijection between true structure and graph}
Unfortunately, ground truth communities do not imply a single graph---and vice versa. \citet{peel2017ground} go as far as to claim, \say{Searching for the ground truth partition without knowing the exact generative mechanism is an impossible task.}

We can imagine the following steps for how problem instances are created, given that we have \(N = |V|\) nodes:
\begin{enumerate}
    \item Sample (true) partition \true{} $\in \Omega$;
    \item Generate graph $G$ from \true{} by adding edges according to the edge-generating process $g$.
\end{enumerate}
where \(\Omega\) is our \term{universe}: the space of all partitions of \(N = | V |\) objects. 
Given a graph $G = (V, E)$, we can imagine multiple truths \(\true_i \in \Omega\) that could define its edge set \(E\) by different generative processes~\(g_i: \Omega \to \Gamma \), where \(\Gamma\) is the set of all graphs with \(N\) nodes. \citet{peel2017ground} give a proof that extends from this simple example: Imagine that \(\true_1\) partitions the \(N\) nodes into \(N\) components (the \(N\)-partition), and \(\true_2\) partitions them into \(1\) component (the \(1\)-partition). Let \(g_1\) exactly specify the number of edges between each pair of communities, such that \(g_1(\true_1)\) is \(\graph\) with probability 1. Similarly, let \(g_2\) be an Erd\H{o}s--R\'enyi model such that \(g_2(\true_2)\) is \(\graph\) with nonzero probability. (\citet{peel2017ground} note that this is easily extended to graphs with more nodes.) We thus have two different ways to create a single graph; how can a method discern the correct one, without knowledge of \(g\)?

Community detection is then an ill-posed inverse problem: Use a function \(f: \Gamma \to \Omega\) to produce a clustering~\(\pred = f(G)\), which is hopefully representative of \true{} \citep[Appendix C]{peel2017ground}.\footnote{That is, the objective is to find \(f = g^{-1}\).} The function $f$ is not a bijection, so there isn't a unique \true{} represented in the given graph. Our algorithm~\(f\) must encode our prior beliefs about the generative process~\(g\) to select from among candidates. For this reason, we must hope that the benchmark graphs that we use are representative of the generative process for graphs in our real-world applications. That is, we hope that our benchmark domain matches our practical domain.

\paragraph{Other set-partitioning tasks}
While the remainder of this work focuses on community detection, our claims are relevant to other set-partitioning tasks.
Notable examples are clustering (the vector space analogue to community detection), graph $k$-partitioning, and \(k\)-clustering.
Metadata about the nodes and edges, such as vector coordinates, are used to guide the identification of such structure, but the tasks are all fundamentally set-partitioning problems. They can also have different universes \(\Omega\)---the latter tasks have a smaller universe than does community detection, for a given graph~\(G\): They consider only partitions with a fixed number of clusters.

\subsection{No Free Lunch theorems}

The \term{No Free Lunch theorem} in machine learning is a claim about the universal (in)effectiveness of learning algorithms. Every algorithm performs equally well when averaging over all possible input--output pairs. Formally, for any learning method $f$, the error (or \term{loss})~\loss{} of the method \(f\), summed over all possible problems \(( g, \true)\) equals a loss-specific constant \(\Lambda(\loss)\): 
\begin{equation}
\sum_{( g, \true )} \loss \left(\true, f\left(g\left(\true\right)\right)\right) = \Lambda(\loss)\text{,}
\label{eqn:nfl}
\end{equation}
defining the edge-generative process \(g\) and partition \(\true\) as above.
This loss is thus \emph{generalizer-independent}.
To reduce loss on a particular set of problems means sacrificing performance on others---\say{\emph{there is no free lunch}} \citep{wolpert1996lack, schumacher2001no}. Judiciously choosing which set to improve involves making assumptions about the distribution of the data: as we've mentioned, \(k\)-means is a method for \(k\)-clustering which works well on data with spherical covariance, similar cluster sizes, and roughly equal class sizes. When these assumptions are violated, performance suffers and overall balance is achieved.

\subsection{Community detection as supervised learning}

We follow \citet{peel2017ground} in framing the task of community detection (CD) as a learning problem. While recent algorithms, e.g.\ \citet{chen2018supervised}, have introduced learnable parameters to community detection algorithms the CD literature's algorithms are by and large untrained. These untrained algorithms encode knowledge of the problem domain in prior beliefs. We note that our work and \citet{peel2017ground} straightforwardly handle both of these cases.


In general supervised machine learning problems
, we seek to learn the function that maps an input space \(\mathbf{X}\) to an output space \(\mathbf{Y}\). We consider problem instances as sampled from random variables over each, so our goal is to learn the conditional distribution~\(p(Y \mid X)\). In the process of training on a dataset \(\mathcal{D}\), we develop a distribution over hypotheses~\(q\) which are estimates of the distribution~\(p\).

In the case of most community detection algorithms, our input space is the set of graphs on \(N\) nodes \(\Gamma\), and the output space is \(\omega\). There is no training data: \(\mathcal{D} = \varnothing\). All of our prior beliefs about \(p\) must be encoded in the prior distribution \(\Pr(q)\). That is, the model itself must contain our beliefs about the definition of community structure.
Only from the encoded \(\Pr(q)\) and an observed \(x \in \mathbf{X}\) (our graph \graph) do we form our point estimate of the true distribution~\(p\) \citep{peel2017ground}. However, in the case of trainable CD algorithms, we encode our beliefs in the posterior distribution~\(\Pr(q \mid \mathcal{D})\). 

\subsection{Loss functions and a priori superiority}

How should we evaluate an algorithm's predictions? Classification accuracy won't cut it: When comparing to the ground truth, there are no specific labels (e.g.\ no notion of a specific \say{Cluster 2})---only unlabeled groups of like entities. We settle for a measure of similarity in the groupings, quantifying how much the computed partition tells us about the ground truth.

A popular choice of measure is the \emph{normalized mutual information} \citep[NMI;][]{kvalseth1987entropy} between the prediction and the ground truth. While this measure has a long history in community detection, its flaws have been well-noted \citep{vinh2009information, peel2017ground,mccarthy2018normalized,mccarthy2019metrics}. It imposes a \say{geometric} structure upon the universe \(\Omega\),\footnote{To take the example of \citet{peel2017ground}, \(L^2\) loss (squared Euclidian distance) imposes a geometric structure: In the task of guessing points in the unit circle, guessing the center will garner a higher reward, on average, than any other point.} so something as simple as guessing the trivial all-singletons clustering outperforms methods that try at all to find a mesoscopic-level structure \citep{mccarthy2019metrics}. The property which NMI lacks is \emph{generalizer-independence}.

The property of generalizer-independence is defined by the generalization error function, an expectation of the loss $\Expect[L \mid p, q, \mathcal{D}]$. To satisfy this property, the generalization error must be independent of the particular true value \true{}. This is best expressed by \autoref{eqn:nfl}.

The adjusted mutual information (AMI, defined in \autoref{prevres}) \citep{vinh2009information} is a proposed replacement for NMI which does not impose a geometric structure upon the space. Unfortunately, this benefit is not fully realized when the expectation is computed over a space $\Psi \subset \Omega$. For the $\Psi$ used in \citet{peel2017ground}, the expected AMI across all problems is only \emph{asymptotically} generalizer-independent as the graph size grows---it is within some diminishing amount of error \(\varepsilon(N)\) of generalizer-independence, as proven by \citet{peel2017ground}.

\section{Previous Result: Approximate No Free Lunch Theorem}\label{prevres}

\citet{peel2017ground} frame community detection in the style of learning algorithms, letting them prove a No Free Lunch theorem for community detection. They note that the claim holds for \say{an appropriate choice of \ldots \(\mathcal{L}\)}---specifically a loss function~\(\mathcal{L}\) that is generalizer-independent---but their chosen loss function is not fully generalizer-independent. They also consider a stricter property than generalizer-independence:  \term{homogeneity}. With a homogeneous loss function, the \emph{distribution} of the error (not just its expectation) is identical, regardless of the ground truth. A measure which deviates from homogeneity may have this deviation bounded by a function of the number of vertices (the graph order).

\begin{lemma}[\citealp{peel2017ground}]
\label{thm:old-ami-homogeneous}
Adjusted mutual information (AMI) is a homogeneous loss function over the interior of the space of partitions of \(N\) objects, i.e., excluding the \(1\)-partition and the \(N\)-partition. Including these, AMI is homogeneous within~\(\frac{1}{\mathcal{B}_N}\).\footnote{ \(\mathcal{B}_N\) is the \(N\)-th Bell number, i.e., the number of partitions of a set of \(N\) nodes.}
\end{lemma}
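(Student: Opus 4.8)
The plan is to read homogeneity through its combinatorial definition: for a fixed target \true{}, the multiset of values \(\{\AMI(\true, \pred) : \pred \in \Omega\}\) must be the same for every \true{} in the interior. I would split the argument into (i) an exact symmetry reduction that collapses each orbit of partitions to a single representative, and (ii) a direct treatment of the two degenerate partitions, which is where the \(\frac{1}{\mathcal{B}_N}\) slack appears.

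For (i), observe that \(\AMI(\true, \pred)\) is a function of the contingency table \(n_{ij} = |\true_i \cap \pred_j|\) alone: the mutual information, the entropies \(H(\true), H(\pred)\), and the chance term \(\Expect[\mathrm{MI}]\) (computed over the hypergeometric null with the marginals of \true{} and \pred{}) are all determined by \(\{n_{ij}\}\) and its row and column sums. Relabeling the \(N\) objects by a permutation \(\sigma \in S_N\) carries \((\true, \pred)\) to \((\sigma\true, \sigma\pred)\) while permuting the rows and columns of the contingency table, hence leaving every one of these quantities --- and therefore \(\AMI\) --- invariant. Consequently, if \(\true_1\) and \(\true_2\) share a block-size profile, a permutation \(\sigma\) with \(\sigma\true_1 = \true_2\) induces the bijection \(\pred \mapsto \sigma\pred\) of \(\Omega\) under which \(\AMI(\true_1, \pred) = \AMI(\true_2, \sigma\pred)\), so the two multisets coincide. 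This settles homogeneity within each orbit immediately; what remains is to match multisets across different profiles, and here I would lean on the chance correction itself. Subtracting \(\Expect[\mathrm{MI}]\) and dividing by \(\max\{H(\true),H(\pred)\} - \Expect[\mathrm{MI}]\) is precisely the standardization that strips the profile-dependent location and scale out of the raw mutual information, so the residual distribution of \(\AMI\) should depend only on \(N\). I expect this cross-profile standardization to be the main obstacle: the symmetry argument is exact, but verifying that the adjustment equalizes the full distribution --- not merely its mean --- across profiles requires controlling the shape of the hypergeometric law for each pair of marginals, and it is exactly the place where one would expect only asymptotic (rather than exact) agreement to survive.

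Finally, for (ii), the \(1\)-partition and \(N\)-partition must be excised because they are degenerate. With \true{} equal to the \(1\)-partition, \(H(\true) = 0\) and \(\mathrm{MI}(\true,\pred) = 0\) for every \pred{}, so \(\AMI(\true, \cdot)\) collapses to a single value rather than the interior distribution; the \(N\)-partition is anomalous in the opposite direction, its maximal entropy forcing a distribution of \(\AMI\) values unlike that of any interior target. To obtain the stated slack I would pin down the degenerate \(\AMI\) values at these two distinguished points and tally their effect on the aggregate loss; being two partitions out of \(\mathcal{B}_N\), the bookkeeping yields the stated \(\frac{1}{\mathcal{B}_N}\) bound.
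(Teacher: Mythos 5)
First, note that the paper itself contains no proof of this lemma: it is imported from \citet{peel2017ground}, and the paper's own technical work (\autoref{thm:homogeneity}) concerns \(\AMI_{\mathrm{all}}^1\) under \modelone{all}, not the \model{perm}-based AMI of this statement. So your proposal must stand on its own, and it breaks exactly where you flagged concern. Your step (i) orbit argument is correct, but it only yields homogeneity within a block-size profile; the cross-profile step is not merely ``the main obstacle''---it is false, so no amount of control over the hypergeometric null can rescue it. Concretely, take \(N=4\), \(\true_a = \{\{1,2,3\},\{4\}\}\) and \(\true_b = \{\{1,2\},\{3,4\}\}\). Every shape-\((2,2)\) partition has the same contingency table with \(\true_a\) up to row/column permutation, so for each of them \(I = \Expect_{\model{perm}}[I]\) exactly and the AMI numerator vanishes; together with the two trivial partitions (whose numerators also vanish against any interior truth), the loss multiset for \(\true_a\) contains exactly five zeros. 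By the same argument, all four shape-\((3,1)\) partitions score exactly \(0\) against \(\true_b\), giving six zeros. The remaining values have nonvanishing numerators, so the counts at the single loss value \(0\) disagree (five vs.\ six) under any choice of denominator normalization, and the two multisets cannot coincide. Distributional homogeneity over the interior therefore fails, and your plan to match full distributions across profiles cannot be completed.

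What is exactly true over the interior---and all that Wolpert's theorem actually needs---is the mean-level property (generalizer-independence): \(\sum_{\pred \in \Omega} \AMI(\pred, \true) = 0\) for every interior \true{}. Your symmetry observation nearly proves this, if redirected: under \model{perm}, the chance term for a shape pair \((s,t)\) is precisely the average of \(I(\pred, \true)\) over the \(S_N\)-orbit of shape-\(s\) partitions (each orbit element is hit by the same number of permutations), so within each shape class the numerators \(I - \Expect_{\model{perm}}[I]\) sum to zero; the denominator is constant on the class, so each class contributes zero, and summing over classes gives zero for any interior \true{}. This is the same expectation-cancellation device the paper uses for \(\AMI_{\mathrm{all}}^1\) in its proof of \autoref{thm:homogeneity}, applied class by class rather than over all of \(\Omega\) at once. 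Your boundary analysis then lands correctly: under the standard \(0/0\) conventions, either trivial truth gives \(\AMI(\pred,\true) = 1\) at \(\pred = \true\) and \(0\) elsewhere, so the sum is \(1\) rather than \(0\), and the per-partition average deviates by exactly \(\frac{1}{\mathcal{B}_N}\). In short: the lemma is only salvageable if ``homogeneous'' is read at the level of sums/expectations, and step (i) must be restructured around the orbit-average identity rather than distribution matching.
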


\citet{wolpert1996lack} gives a generalized No Free Lunch theorem, which assumes a homogeneous loss.

\begin{theorem}[\citealp{wolpert1996lack}]
\label{thm:nfl-homogeneity}
For homogeneous loss \(\mathcal{L}\), the uniform average over all distributions \(p\) of \(\Pr\left(\ell \mid p, \mathcal{D}\right)\) equals \(\frac{\Lambda(\ell)}{|\mathbf{Y}|}\). \emph{(Plainly, \say{there is no free lunch}.)}
\end{theorem}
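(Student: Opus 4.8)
The plan is to expand $\Pr(\ell \mid p, \mathcal{D})$ as a sum over the algorithm's hypotheses $q$ and over the true and predicted outputs at the single observed input $x = \graph$ (recall $\mathcal{D} = \varnothing$ for community detection), and then average uniformly over targets $p$ so that all dependence on the target collapses into a symmetric factor. Writing $P$ for the finite set of target distributions and $[\,\cdot\,]$ for the Iverson bracket, I would first decompose
\begin{equation}
\Pr(\ell \mid p, \mathcal{D}) = \sum_{q} \Pr(q \mid \mathcal{D}) \sum_{\true,\pred} \left[\loss(\true,\pred)=\ell\right]\, p(\true \mid x)\, q(\pred \mid x)\text{,}
\end{equation}
which selects exactly those output pairs whose loss equals the value $\ell$.

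Second, I would take the uniform average over $p \in P$. Since $P$ is symmetric under relabeling of the outputs $\mathbf{Y} = \Omega$, the averaged marginal $\frac{1}{|P|}\sum_{p} p(\true \mid x)$ equals the uniform value $1/|\mathbf{Y}|$ for every $\true$, independent of $\true$. Substituting this and reordering the sums yields
\begin{equation}
\frac{1}{|P|}\sum_{p}\Pr(\ell \mid p, \mathcal{D}) = \frac{1}{|\mathbf{Y}|}\sum_{q}\Pr(q\mid\mathcal{D})\sum_{\pred} q(\pred \mid x)\sum_{\true}\left[\loss(\true,\pred)=\ell\right]\text{.}
\end{equation}

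Third --- the step where \autoref{thm:old-ami-homogeneous} does the real work --- I would argue that the inner count $\sum_{\true}\left[\loss(\true,\pred)=\ell\right]$ does not depend on $\pred$. Homogeneity states that the distribution of loss values is the same for every ground truth; combined with the symmetry of the similarity measure in its two arguments, this forces the number of outputs at loss $\ell$ to be a single constant $\Lambda(\ell)$ no matter which argument is held fixed. Replacing the count by $\Lambda(\ell)$, the remaining sum over $\pred$ and the sum over $q$ are each just the normalization of a probability distribution and collapse to $1$, leaving $\Lambda(\ell)/|\mathbf{Y}|$ with no residual dependence on the algorithm's distribution over hypotheses $q$ --- precisely \say{no free lunch} in the sense of \autoref{eqn:nfl}.

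The hard part is the third step: ensuring homogeneity carries the factorization exactly rather than approximately. This is exactly the distinction the paper stresses --- if $\sum_{\true}\left[\loss(\true,\pred)=\ell\right]$ varies with $\pred$ even slightly (as it does for AMI once the $1$- and $N$-partitions are admitted, or when the expectation is taken over the restricted space $\Psi \subset \Omega$), the constant cannot be pulled out of the sum and one recovers only the approximate theorem of \citet{peel2017ground}, off by an error term $\varepsilon(N)$. A secondary point requiring care is the second step: one must confirm the prior over targets is symmetric enough that $\frac{1}{|P|}\sum_{p} p(\true \mid x) = 1/|\mathbf{Y}|$ holds exactly, which is immediate for the uniform average but should be stated explicitly.
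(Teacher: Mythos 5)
Your proposal is correct, but there is nothing in the paper to compare it against step by step: the paper never proves \autoref{thm:nfl-homogeneity}, it imports the result wholesale from \citet{wolpert1996lack}, exactly as it later does for \autoref{thm:nfl-gi}, whose entire stated proof is a deferral to that reference. What you have produced is a reconstruction of Wolpert's own argument, specialized to the community detection setting where $\mathcal{D} = \varnothing$ and there is a single observed input $x = \graph$: expand $\Pr(\ell \mid p, \mathcal{D})$ over hypotheses $q$ and output pairs, average over targets so that each truth receives weight $1/|\mathbf{Y}|$, then use homogeneity to replace the inner count by $\Lambda(\ell)$ so that the normalizations over $\pred$ and $q$ collapse to $1$. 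That skeleton is sound, and your closing observation---that the factorization breaks exactly where homogeneity fails, e.g.\ for AMI under \model{perm} once the boundary partitions are admitted---is precisely the leverage the paper uses against \citet{peel2017ground}. Two points should be made explicit rather than left implicit. First, the paper defines homogeneity with the \emph{truth} held fixed (the distribution of loss over predictions is the same for every \true{}), whereas your factorization needs the count over truths with the \emph{prediction} held fixed; passing between the two requires symmetry of the loss, $\loss(\true, \pred) = \loss(\pred, \true)$, which AMI satisfies but which is a genuine extra hypothesis, not a consequence of homogeneity alone. Second, the \say{uniform average over all distributions $p$} is an average over a continuum, so one must either restrict to deterministic targets (Wolpert's setting, a finite set symmetric under relabeling of $\mathbf{Y}$) or integrate against a relabeling-invariant measure; your identity $\frac{1}{|P|} \sum_p p(\true \mid x) = 1/|\mathbf{Y}|$ holds under both readings, but the notation $|P|$ is only meaningful under the first.
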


\citet{peel2017ground} then use Wolpert's result with their inexactly homogeneous measure to claim a No Free Lunch result.

\begin{theorem}[\citealp{peel2017ground}]
\label{thm:bad-nfl}
By \autoref{thm:old-ami-homogeneous} and \autoref{thm:nfl-homogeneity}, for the community detection problem with a loss function of AMI, the uniform average over all distributions \(p\) of \(\Pr(\ell \mid p, \mathcal{D})\) equals \(\frac{\Lambda(\ell)}{|\mathbf{Y}|}\).
\end{theorem}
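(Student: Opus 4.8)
The plan is to treat \autoref{thm:bad-nfl} as a direct composition of the two results it cites, since community detection has already been cast as a supervised learning problem with empty training data. First I would fix the dictionary between the two settings: the input space is the set of graphs \(\mathbf{X} = \Gamma\) on \(N\) nodes, the output space is the universe of partitions \(\mathbf{Y} = \Omega\), the dataset is \(\mathcal{D} = \varnothing\), and the loss \(\ell\) is taken to be AMI. Under this correspondence, the objects \(p\), \(q\), and \(\mathcal{D}\) appearing in Wolpert's framework are exactly the conditional distribution, the hypothesis distribution, and the (empty) dataset introduced for CD, so the hypotheses of \autoref{thm:nfl-homogeneity} can be checked term by term.

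Next I would verify the single nontrivial hypothesis of \autoref{thm:nfl-homogeneity}, namely that \(\ell\) is homogeneous. This is supplied by \autoref{thm:old-ami-homogeneous}: AMI is homogeneous on the interior of \(\Omega\). With homogeneity in hand, \autoref{thm:nfl-homogeneity} applies and yields that the uniform average over all distributions \(p\) of \(\Pr(\ell \mid p, \mathcal{D})\) equals \(\frac{\Lambda(\ell)}{|\mathbf{Y}|}\), which is the claim. Because the theorem is stated as a corollary of the two cited results, the body of the argument is just this substitution; no new estimate is introduced.

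The hard part---and the reason the result is labelled \emph{approximate} in the surrounding text---is that the homogeneity guaranteed by \autoref{thm:old-ami-homogeneous} is \emph{exact only on the interior} of \(\Omega\). Once the \(1\)-partition and the \(N\)-partition are included, AMI is homogeneous merely within \(\frac{1}{\mathcal{B}_N}\), whereas \autoref{thm:nfl-homogeneity} demands exact homogeneity across all of \(\mathbf{Y}\). So the genuine obstacle is the two boundary partitions: I would either (i) restrict the average to the interior partitions, where the equality holds on the nose, and argue that the two excluded ground truths perturb the total by at most an \(O(1/\mathcal{B}_N)\) term; or (ii) carry the \(\frac{1}{\mathcal{B}_N}\) slack through Wolpert's averaging and read the conclusion as an equality up to that error. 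Either route makes visible that the stated equality is only \emph{asymptotically} correct as \(N \to \infty\), since the boundary contributes a deviation \(\varepsilon(N)\) that Wolpert's theorem cannot absorb. This gap is precisely what the remainder of the paper sets out to close by replacing AMI with a loss that is exactly homogeneous over all of \(\Omega\).
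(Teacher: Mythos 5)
Your proposal matches the paper's treatment of \autoref{thm:bad-nfl}: the theorem is presented there exactly as a direct composition of \autoref{thm:old-ami-homogeneous} and \autoref{thm:nfl-homogeneity}, using the dictionary \(\mathbf{X} = \Gamma\), \(\mathbf{Y} = \Omega\), \(\mathcal{D} = \varnothing\), \(\ell = \AMI\), with no further argument. You also flag the same defect the paper emphasizes immediately after stating the theorem---that \autoref{thm:old-ami-homogeneous} gives exact homogeneity only on the interior of \(\Omega\), so including the \(1\)-partition and \(N\)-partition leaves a \(\frac{1}{\mathcal{B}_N}\) deviation that Wolpert's theorem cannot absorb, making the result merely asymptotic (\say{a free morsel} of lunch)---which is precisely the gap the rest of the paper closes by substituting \(\AMI_{\mathrm{all}}^1\).
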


But this choice of measure (AMI) is not, in fact, homogeneous over the \emph{entire} universe \(\Omega\) (\autoref{thm:old-ami-homogeneous}). A strategy that guesses either of the non-interior (i.e., boundary) partitions---the \(1\)-partition or \(N\)-partition---will yield a higher-than-average reward. There is indeed a negligible amount of free lunch---a free morsel, if you will.

\section{Diagnosis: Random Models}

\begin{figure}
\centering
	\begin{subfigure}[b]{0.48\linewidth}
	\includegraphics[width=\linewidth]{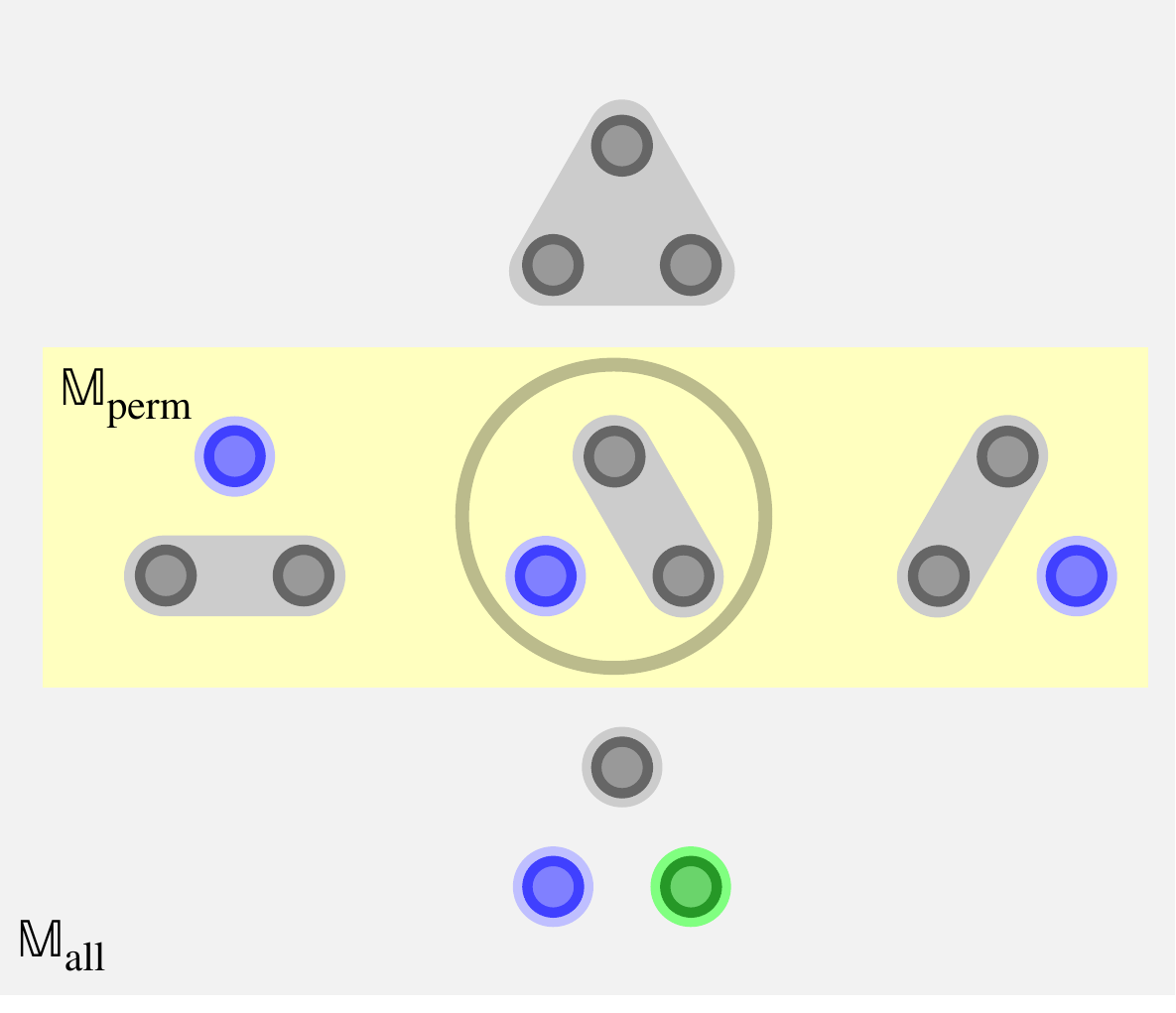}
	\caption{Ground truth has cluster size pattern \(\{2, 1\}\).}
	\end{subfigure}
	~
	\begin{subfigure}[b]{0.48\linewidth}
	\includegraphics[width=\linewidth]{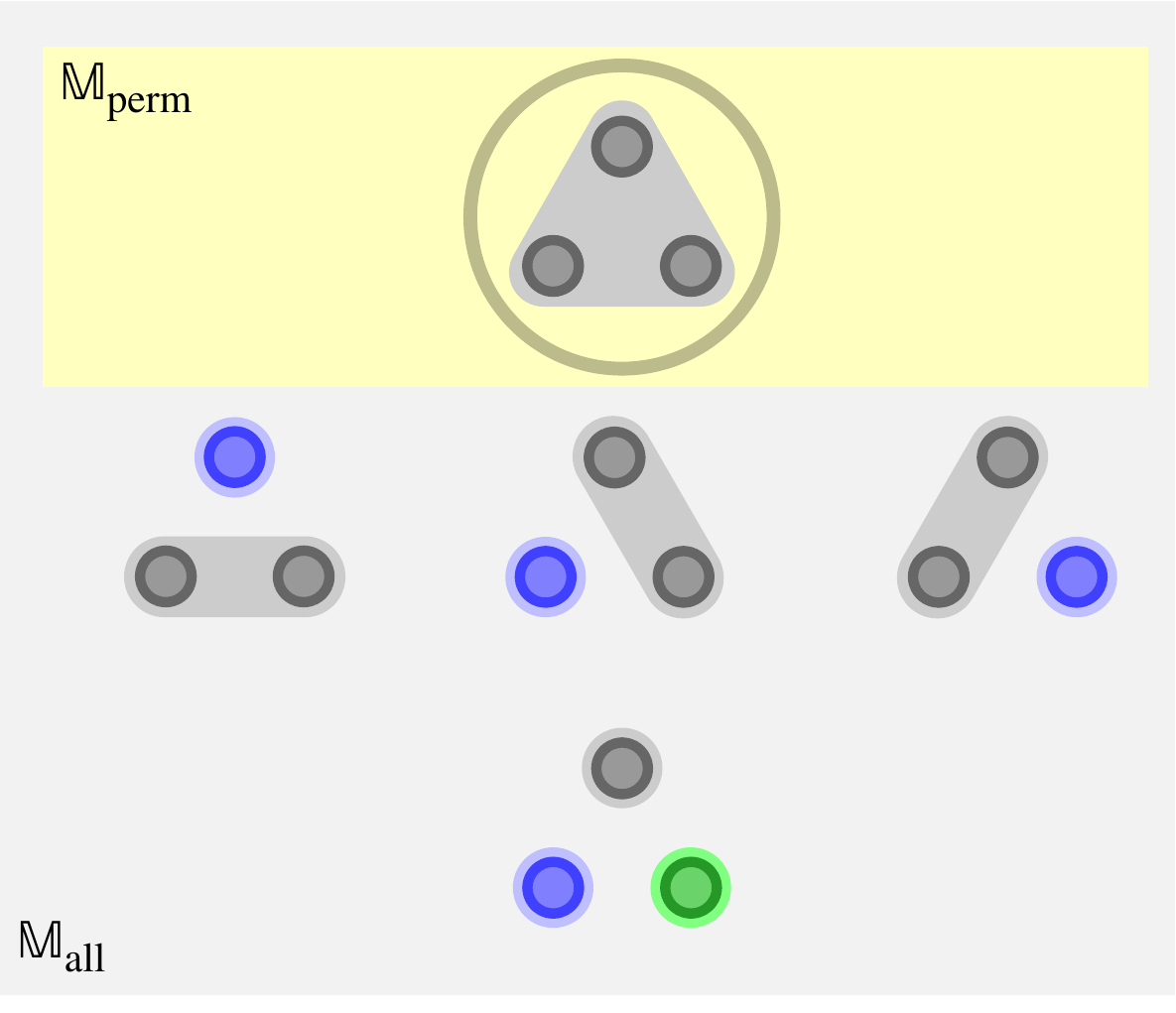}
	\caption{Ground truth has cluster size pattern \(\{3\}\).}
	\end{subfigure}

\caption{\model{all} and \model{perm} when clustering three nodes, for two different ground truths (circled). The top and bottom clusterings---the $1$ and $N$ clusterings---are the boundary partitions. All other partitions form the interior. \model{perm} changes based on the ground truth, but \model{all} stays the same.}
\label{fig:random-models}
\end{figure}

\citet{peel2017ground} use AMI out of the box, as proposed by \citet{vinh2009information}, which involves subtracting an expected value from a raw score. Unfortunately, AMI as given takes its expectation over the wrong distribution. 
Because of the mismatch, \citet{peel2017ground}'s claim of homogeneity is accurate only to within \(\frac{1}{\mathcal{B}_N}\) when considering the trivial partitions into either one community or \(N\)~communities.

Correcting this is arguably a pedantic demand, for two reasons:
\begin{enumerate}
\item The fraction \(\frac{1}{\mathcal{B}_N}\) converges to 0 superexponentially as \(N\) increases.
\item The deficiency is only present when \true{} is one of the trivial partitions. Otherwise, AMI as used is exactly homogeneous. But the trivial partitions reflect a lack of any mesoscopic community structure.
\end{enumerate}
Nevertheless, we'd like to see a tight claim of generalizer independence. To do this, we must select the proper \term{random model}, a sample space for a distribution. 

AMI adjusts NMI by subtracting the expected value from both the numerator and the denominator, shown in blue:
\begin{equation}
\label{eqn:ami}
\AMI(\pred, \true) \triangleq 
\frac%
	{I(\pred, \true) \mathcolor{blue}{- \Expect_{\pred', \true'}\left[ I(\pred', \true') \right] }}%
	{\mathcolor{magenta}{\max_{\pred', \true'} I(\pred', \true') } \mathcolor{blue}{- \Expect_{\pred', \true'}\left[ I(\pred', \true') \right] }}
\textrm{,}
\end{equation}
where \(I\) is the mutual information, maximized when the specific clustering~\pred{} equals the ground truth~\true{}. By inspecting \autoref{eqn:ami}, we see that AMI's value is \(1\) (the maximum) when \(\pred = \true\), \(0\) in expectation, and negative when the agreement between \pred{} and \true{} is worse than chance.

Subtly hidden in this equation is the decision of which distribution to compute the expectation over. For decades, this distribution has been what \citet{gates2017impact} call \model{perm}: all partitions of the same \term{partition shape}\footnote{A multiset of cluster sizes, also called a decomposition pattern \citep{hauer2016decoding} or a group-size distribution \citep{lai2016corrected}. It is equivalent to an integer partition of \(N\).} as \pred{} or \true{}. For example, if \pred{} partitioned 7 nodes into clusters of sizes 2, 2, and 3, then we would compute the expected mutual information over all clusterings where one had cluster sizes of 2, 2, and 3.

\citet{mccarthy2019metrics} argue that \model{perm} is inappropriate. To use this random model assumes that we can only produce outputs within that restricted space, when in actuality \(\Omega\) is the set of \emph{all} partitions of \(N\) nodes. Furthermore, during evaluation, we hold our ground truth fixed, rather than marginalizing over possible ground truths. Were we to instead consider a distribution over \true{}s, we would add noise from other possible generative processes which yield the same graph from different underlying partitions. In our average, we might be including scores on ground truths that better align with our notions of, say, core--periphery partitioning. For this reason, we take a \term{one-sided expectation}---over \(\mathcal{C}\), holding \(\mathcal{T}\) fixed. The one-sided distribution over all partitions of \(N\) nodes is called \modelone{all} \cite{gates2017impact}. This distribution is what we use for our AMI expectation, giving a measure denoted as \(\mathrm{AMI}_{\textrm{all}}^1\), which is recommended by \citet{mccarthy2019metrics}. It takes the form
\begin{equation}
\label{eqn:ami-all}
\AMI_{\mathrm{all}}^1(\pred, \true) \triangleq 
\frac%
	{I(\pred, \true) \mathcolor{blue}{- \Expect_{\pred^\prime \sim \modelone{all}}\left[ I(\pred^\prime, \true) \right] }}%
	{\mathcolor{magenta}{\max_{\pred'} I(\pred', \true) } \mathcolor{blue}{- \Expect_{\pred^\prime \sim \modelone{all}}\left[ I(\pred^\prime, \true) \right] }}
\textrm{.}
\end{equation}
The differences between \model{all} and \model{perm} are illustrated in \autoref{fig:random-models} under \(|V| = 3\).
We will now show that substituting \model{all} for \model{perm}, hence using \(\AMI_{\rm all}^1\), allows for an exact No Free Lunch theorem.

\section{An Exact No Free Lunch Theorem}

We strengthen the No Free Lunch theorem for community detection given by \citet{peel2017ground} by using an improved loss function, \(\mathrm{AMI}_{\textrm{all}}^1\), for community detection. Our proof does not distinguish the \say{boundary} partitions (the two trivial partitions) from the \say{interior} partitions (the remainder). It is entirely agnostic toward the particular ground truth \true{}, which is exactly what we need. 
We improve the previous result by moving from \model{interior} (which excludes the boundary partitions) to \model{all}.



\subsection{Generalizer-independence of \(\AMI_{\mathrm{all}}^1\)}

\begin{lemma}
\label{thm:homogeneity}
\(\AMI_{\mathrm{all}}^1\) is a generalizer-independent loss function over the \emph{entire} space \model{all} of partitions of \(N\)~objects.
\end{lemma}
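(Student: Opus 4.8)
The plan is to establish generalizer-independence in its direct form: to show that the expected loss $\Expect_{\pred' \sim \modelone{all}}\left[\AMI_{\mathrm{all}}^1(\pred', \true)\right]$ takes the same value for every ground truth $\true$, so that no generalizer can gain an advantage by favoring any particular output. I would prove the sharpest version of this, namely that the common value is exactly $0$, by exploiting the fact that the correction term subtracted inside $\AMI_{\mathrm{all}}^1$ is itself an expectation under the very model $\modelone{all}$ over which we average.

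First I would fix an arbitrary $\true$ and isolate the structure of \autoref{eqn:ami-all}. Both the subtracted term $\Expect_{\pred'' \sim \modelone{all}}\left[I(\pred'', \true)\right]$ and the normalizer $\max_{\pred''} I(\pred'', \true)$ depend on $\true$ alone and not on the argument $\pred'$; hence the denominator is a constant $Z(\true)$ with respect to the outer average. Pulling $Z(\true)$ out and applying linearity of expectation, the expected numerator reduces to $\Expect_{\pred'}\left[I(\pred', \true)\right] - \Expect_{\pred'' \sim \modelone{all}}\left[I(\pred'', \true)\right]$. Because both expectations range over the identical distribution $\modelone{all}$, they coincide and cancel, leaving $0$. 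This is exactly the step at which $\modelone{all}$ is indispensable: under $\model{perm}$ the correction would average only over the shape-restricted subset while the outer average ranges over all of $\modelone{all}$, so the two expectations would differ and the denominator would no longer be constant in $\pred'$---precisely the mismatch that produces the $\frac{1}{\mathcal{B}_N}$ slack in \autoref{thm:old-ami-homogeneous}.

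The argument above is uniform in $\true$, so it transfers verbatim to the boundary partitions as long as $Z(\true) \neq 0$. For the $N$-partition this holds, since $I(\pred', \true)$ still varies with $\pred'$ and its maximum strictly exceeds its mean. The one delicate case---and the step I expect to be the main obstacle to an \emph{exact} claim---is the $1$-partition, where $I(\pred', \true) \equiv 0$ collapses both numerator and denominator to $0$ and renders the quotient nominally $\tfrac{0}{0}$. Here I would invoke the natural convention that, when the ground truth carries no information, every prediction is assigned the same score, so the outer expectation is trivially constant in $\pred'$; fixing that constant at $0$ makes it agree with the value obtained at every other $\true$. I would stress that this choice is what secures exactness: the alternative convention of awarding $1$ to the exact match would reinject a spurious $\frac{1}{\mathcal{B}_N}$ term and recover only the approximate result. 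With the convention in force, the expected loss equals $0$ for all $\true \in \modelone{all}$, which is generalizer-independence over the entire space.
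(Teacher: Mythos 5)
Your proof is correct and follows essentially the same route as the paper's: both arguments factor the denominator out as a constant with respect to the prediction, and then exploit the fact that the subtracted correction term is an expectation under the very same uniform model \(\modelone{all}\) that the outer sum/average ranges over, so the numerator cancels exactly for every \(\true\). (The paper phrases the conclusion as equality of the sums \(\sum_{\pred \in \Omega} \AMI_{\mathrm{all}}^1(\pred, \true)\) across ground truths; your expectation form is the identical statement divided by \(|\Omega|\), and both in fact pin the common value at \(0\).)

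The one place you diverge is the treatment of the denominator. You read \(\max_{\pred'} I(\pred', \true)\) literally, i.e.\ as \(H(\true)\), which vanishes at the \(1\)-partition and forces your \(0/0\) convention. The paper instead takes the max-term to be the constant \(\log N\), citing \citet{gates2017impact}, so for the \(1\)-partition the denominator is \(\log N - 0 > 0\) and every prediction automatically receives score \(0\) there---which is exactly the value your convention stipulates. So your patch is consistent with the paper's result, and your observation that the opposite convention (awarding \(1\) to the exact match) would reintroduce a \(\frac{1}{\mathcal{B}_N}\) discrepancy is a sound diagnosis of why the boundary case matters; it is simply rendered unnecessary by the normalization the paper adopts. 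If you wanted your version to stand alone, you would either justify the scoring convention as part of the definition of \(\AMI_{\mathrm{all}}^1\) or adopt the \(\log N\) normalizer outright.
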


\begin{proof}
Like \citet{peel2017ground}, we must show that the sum of scores is independent of~\true:
\begin{equation}
\label{eqn:L}
\forall \true_1, \true_2,\quad 
\sum_{\pred \in \Omega} \AMI_{\mathrm{all}}^1 \left(\pred, \true_1\right) 
=
\sum_{\pred \in \Omega} \AMI_{\mathrm{all}}^1 \left(\pred, \true_2\right)
\text{,}
\end{equation}
 where \(\Omega\)~is the space of all partitions of \(N\)~objects. Unlike \citet{peel2017ground}, we take the AMI expectation over all \(\mathcal{B}_N\) clusterings in \(\Omega\) using the random model \modelone{all} \citep{gates2017impact}. 

To prove our claim about \autoref{eqn:L}, we note that denominator of \(\AMI_{\mathrm{all}}^1\) is a constant with respect to \pred~(\autoref{eqn:ami-all}), so we can factor it out of the sum and restrict our attention to the numerator. This is because the max-term in the denominator is the constant \(\log N\) \citep{gates2017impact} and the expectation term for a given \true{} is independent of the particular \pred{}. Having factored this out, we will now prove \autoref{eqn:L} by the stronger claim:
\begin{equation}
\label{eqn:numerators}
\sum_{\pred \in \Omega} \left[ I(\pred, \true) - \Expect_{\pred' \sim \modelone{all}}\left[ I(\pred', \true) \right] \right] \stackrel{?}{=} 0 \quad \forall \, \true
\end{equation}
To prove \autoref{eqn:numerators}, we separate the summation's two terms:
\begin{equation}
\sum_{\pred \in \Omega} \left[ I(\pred, \true) \right] - \sum_{\pred \in \Omega} \left[ \Expect_{\pred' \sim \modelone{all}}\left[ I(\pred', \true) \right] \right]
\end{equation}
The expectation is uniform over the universe \(\Omega\),\footnote{Why do we assume uniformity over \(\Omega\)? Because this is the highest-entropy (i.e., least informed) distribution---it places the fewest assumptions on the distribution.} so we can apply the law of the unconscious statistician, then push the constant probability out, to get
\begin{equation}
\sum_{\pred \in \Omega} \left[ I(\pred, \true) \right] - \sum_{\pred \in \Omega} \left[ \frac{1}{|\Omega|} \sum_{\pred' \in \Omega}\left[ I(\pred', \true) \right] \right]
\end{equation}
Because 
the inner sum is independent of any particular \(\pred{}'\), the outer sum is a sum of constants---one for each element in \(\Omega\). We can now express \autoref{eqn:numerators} as follows, where the reciprocals straightforwardly cancel out:
\begin{equation}
\sum_{\pred \in \Omega} \left[ I(\pred, \true) \right] - |\Omega| \frac{1}{|\Omega|} \sum_{\pred' \in \Omega}\left[ I(\pred', \true) \right]  \equiv 0
\text{.}
\end{equation}
This equivalence implies that \autoref{eqn:L} is true.
\end{proof}

The proof is valid without loss of generality vis-\`a-vis the distribution---that is, as long as the AMI expectation is computed uniformly over the problem universe \(\Omega\), AMI is a generalizer-independent measure. This stipulation is relevant to tasks which assume a fixed number of clusterings---using \model{num}---like \(k\)-clustering and graph partitioning.  

Having demonstrated the generalizer-independence of AMI, we can define our loss function as, say, 
\begin{equation}
\loss (\pred, \true) = 1 - \AMI(\pred, \true)\text{.}
\end{equation}
 The loss is zero when we exactly match the true clustering and positive otherwise. 
 
Having proven the generalizer-independence of \(\mathrm{AMI}_{\mathrm{all}}^1\), we now turn to a more general form of the No Free Lunch theorem, which admits not just a homogeneous loss function but any generalizer-independent loss.

\begin{theorem}[\citealp{wolpert1996lack}]
\label{thm:nfl-gi}
For generalizer-independent loss \(\ell\), the uniform average over all \(p\), \(\mathbb{E}\left[\ell \;\middle\vert\; p, \mathcal{D}\right]\), equals \(\frac{\Lambda(\ell)}{|\mathbf{Y}|}\). (Plainly, \say{There is no free lunch.})
\end{theorem}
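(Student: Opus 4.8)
The plan is to reconstruct Wolpert's argument, now leaning on the generalizer-independence of \(\ell\) established for \(\AMI_{\mathrm{all}}^1\) in \autoref{thm:homogeneity} rather than on the stronger homogeneity demanded by \autoref{thm:nfl-homogeneity}. First I would write the generalization error \(\Expect[\ell \mid p, q, \mathcal{D}]\) as a weighted sum of loss values over prediction--truth pairs, \(\sum_{\pred, \true} \Pr(\pred, \true \mid p, q, \mathcal{D})\,\ell(\pred, \true)\), where the joint \(\Pr(\pred, \true \mid p, q, \mathcal{D})\) couples the generalizer's output \(\pred\) to the ground truth \(\true\) through the observed graph. The goal is to show that uniformly averaging this over all target distributions \(p\) collapses that coupling and annihilates every dependence on the generalizer \(q\).

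Next I would carry out the uniform average over \(p\). The key symmetry is that the uniform measure on the space of target distributions is invariant under any relabeling of the outcomes in \(\mathbf{Y}\). Averaging the joint over \(p\) therefore yields a symmetrized joint whose truth-marginal is uniform---\(\frac{1}{|\mathbf{Y}|}\) for every \(\true\)---and, crucially, decoupled from \(\pred\), so that the averaged joint factors as \(\frac{1}{|\mathbf{Y}|}\,\Pr(\pred \mid q, \mathcal{D})\). This is exactly the situation engineered in the proof of \autoref{thm:homogeneity}, where averaging uniformly over \(\Omega\) via \modelone{all} is what let an inner sum collapse to a constant.

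Then I would invoke generalizer-independence. Exchanging the order of summation isolates the inner sum \(\sum_{\true}\ell(\pred, \true)\); by the defining property of a generalizer-independent loss (\autoref{eqn:nfl}), summing \(\ell\) over the outcome the learner does not control returns the loss-specific constant \(\Lambda(\ell)\), independent of \(\pred\). Factoring it out leaves \(\frac{\Lambda(\ell)}{|\mathbf{Y}|}\sum_{\pred}\Pr(\pred \mid q, \mathcal{D})\), and the remaining sum equals \(1\) because \(\Pr(\cdot \mid q, \mathcal{D})\) is a probability distribution. Every trace of \(q\), and hence of the algorithm, has vanished, yielding \(\frac{\Lambda(\ell)}{|\mathbf{Y}|}\).

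The hard part is the symmetrization/decoupling step: rigorously showing that the uniform average over the space of all target distributions \(p\) both uniformizes the ground-truth marginal and severs its correlation with the prediction \(\pred\)---a correlation that is genuinely present for any single \(p\), since the graph is generated from \(\true\). This is precisely the role the random model plays; \modelone{all} here mirrors the uniform-over-\(\Omega\) expectation that made \autoref{thm:homogeneity} exact rather than asymptotic, so the principal risk is a mismatch between the averaging distribution and the universe over which generalizer-independence was proven. Once the average is taken over the same universe \(\Omega = \mathbf{Y}\) on which \autoref{thm:homogeneity} guarantees generalizer-independence, the collapse of the inner loss sum to \(\Lambda(\ell)\) and the decoupling of \(\Pr(\pred \mid q, \mathcal{D})\) follow routinely.
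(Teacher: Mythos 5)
The paper's own ``proof'' of this theorem is a bare citation---\say{See \citet{wolpert1996lack}}---so the only meaningful comparison is against Wolpert's argument itself. Your skeleton is faithful to it: expand \(\Expect[\ell \mid p, q, \mathcal{D}]\) over prediction--truth pairs, use the uniform average over targets \(p\) to uniformize the truth marginal and sever its coupling to the prediction (legitimate here because \(\mathcal{D} = \varnothing\), so every instance is off-training-set, which is precisely the condition Wolpert needs for that decoupling), then collapse the inner sum and divide by \(|\mathbf{Y}|\). Those steps, taken abstractly, are correct and are the cited argument.

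The genuine gap is in your final claim that the collapse of the inner sum \say{follows routinely} from \autoref{thm:homogeneity}. Your argument needs the sum over \emph{truths} with the prediction held fixed, \(\sum_{\true \in \Omega} \ell(\pred, \true)\), to be a constant independent of \(\pred\)---you say so yourself (\say{summing \(\ell\) over the outcome the learner does not control}). But \autoref{thm:homogeneity} establishes the transposed identity: \(\sum_{\pred \in \Omega} \AMI_{\mathrm{all}}^1(\pred, \true) = 0\) for every \(\true\), a sum over \emph{predictions} with the truth held fixed. For the symmetric, two-sided AMI of \citet{peel2017ground} the two coincide, but \(\AMI_{\mathrm{all}}^1\) is deliberately one-sided: its internal expectation \(\Expect_{\pred' \sim \modelone{all}}\left[ I(\pred', \true) \right]\) and its denominator are functions of \(\true\) alone, so the two directions genuinely differ. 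Concretely, taking \(\pred\) to be the \(N\)-partition gives \(I(\pred, \true) = H(\true)\), the maximal value, so every summand of \(\sum_{\true} \AMI_{\mathrm{all}}^1(\pred, \true)\) is nonnegative and the sum is positive; taking \(\pred\) to be the \(1\)-partition gives \(I(\pred, \true) = 0\) and a negative sum. The quantity your proof requires to be constant in \(\pred\) is therefore not constant, and it is not what \autoref{thm:homogeneity} delivers. Your middle paragraph shows where this slips in: you identify the NFL average over targets \(p\) with the \(\modelone{all}\) expectation inside the definition of AMI, but these are different averages (over truths and over predictions, respectively), and conflating them hides the direction mismatch. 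Read as a standalone proof of Wolpert's abstract theorem, with generalizer-independence \emph{defined} as the truth-sum property suggested by \autoref{eqn:nfl}, your argument is sound; as written, welded to \autoref{thm:homogeneity}, the last step fails---and this failure is substantive, since it is exactly the point on which the paper's subsequent application of this theorem depends.
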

\begin{proof}
See \citet{wolpert1996lack}.
\end{proof}

\begin{theorem}[No Free Lunch theorem for community detection and other set-partitioning tasks]
For a set-partitioning problem with a loss function of adjusted mutual information \emph{using the appropriate random model for the task}, the uniform average over all \(p\), \(\mathbb{E}\left[\ell \;\middle\vert\; p, \mathcal{D}\right]\), equals \(\frac{\Lambda(\ell)}{|\mathbf{Y}|}\).
\end{theorem}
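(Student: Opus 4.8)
The plan is to obtain the final statement as a direct composition of the two results already in hand: the generalizer-independence of $\AMI$ under a task-appropriate uniform random model (\autoref{thm:homogeneity}) and Wolpert's general No Free Lunch theorem for any generalizer-independent loss (\autoref{thm:nfl-gi}). Concretely, I would first argue that for each set-partitioning task the output space $\mathbf{Y}$ \emph{is} exactly that task's universe $\Omega$ of admissible partitions; then show that $\AMI$ with the matching random model is generalizer-independent over that $\Omega$; and finally feed the resulting loss into \autoref{thm:nfl-gi} to read off the averaged loss $\frac{\Lambda(\ell)}{|\mathbf{Y}|}$.

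The substantive step is extending \autoref{thm:homogeneity} beyond community detection. Inspecting its proof, the cancellation $\sum_{\pred \in \Omega}[I(\pred,\true) - \Expect_{\pred'}[I(\pred',\true)]] = 0$ used only two facts: that the expectation is taken uniformly over the same universe $\Omega$ that indexes the outer sum, and that the $\AMI$ denominator does not depend on the particular output $\pred$ being scored. Neither fact is special to $\modelone{all}$. I would therefore instantiate $\Omega$ per task---all partitions of the $N$ nodes for community detection (random model $\modelone{all}$), and only the partitions into a fixed number of blocks for $k$-clustering and graph $k$-partitioning (random model $\model{num}$)---and rerun the same argument verbatim, concluding $\sum_{\pred \in \Omega} \AMI(\pred, \true) = 0$ for every $\true$. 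Setting $\loss = 1 - \AMI$ then gives $\sum_{\pred \in \Omega} \loss(\pred, \true) = |\Omega|$, a constant independent of $\true$; this is precisely generalizer-independence with $\Lambda(\loss) = |\mathbf{Y}|$.

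With generalizer-independence established over the correct $\mathbf{Y} = \Omega$, the conclusion is immediate: \autoref{thm:nfl-gi} applies to any generalizer-independent loss and yields that the uniform average over all $p$ of $\Expect[\ell \mid p, \mathcal{D}]$ equals $\frac{\Lambda(\ell)}{|\mathbf{Y}|}$, which for the community detection case specializes to the exact statement strengthening \autoref{thm:bad-nfl}.

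I expect the main obstacle to be the denominator's $\pred$-independence once $\Omega$ is restricted. For the full partition lattice, the maximal mutual information term is the clean constant $\log N$; on a fixed-block universe this value need no longer equal $\log N$, so I cannot simply cite that identity. I would instead verify the weaker property that actually drives the proof---that both the $\max_{\pred'} I(\pred',\true)$ term and the $\Expect_{\pred'}[I(\pred',\true)]$ term range over $\pred'$ with $\true$ held fixed, hence depend only on $\true$ (and on the fixed structure of $\Omega$) and not on the scored output $\pred$. This lets me factor the per-$\true$ denominator out of the sum over $\pred$; dividing the vanishing numerator-sum by any nonzero per-$\true$ constant still yields $0$, so the argument closes. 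A secondary check is that the notion of generalizer-independence proven here---a constant sum of losses over outputs---coincides with the hypothesis required by \autoref{thm:nfl-gi}.
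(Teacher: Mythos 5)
Your proposal is correct and takes essentially the same route as the paper: it composes the generalizer-independence lemma (\autoref{thm:homogeneity}) with Wolpert's theorem for generalizer-independent losses (\autoref{thm:nfl-gi}), which is exactly the paper's two-line proof. Your extra care about restricted universes---re-running the cancellation for \model{num} and observing that the argument needs only the denominator's independence from \pred{}, not the specific constant \(\log N\)---fills in the detail the paper covers in its remark following \autoref{thm:homogeneity}, that the lemma holds for any uniform expectation over the task's own universe \(\Omega\).
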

\begin{proof}
\autoref{thm:homogeneity} proves that AMI \emph{using the appropriate random model} is generalizer-independent. Applying \autoref{thm:nfl-gi} 
completes the proof \citep{peel2017ground}.
\end{proof}

\subsection{Other measures}

AMI stemmed from a series of efforts to improve normalized mutual information (NMI). We note that six other measures, when extended to \modelone{all} instead of \model{perm}, are also generalizer-independent: the adjusted Rand index \citep[ARI;][]{hubert1985comparing}, relative NMI \citep[rNMI;][]{zhang2015evaluating}, ratio of relative NMI \citep[rrNMI;][]{zhang2015relationship}, Cohen's \(\kappa\) \citep{liu2018evaluation}, corrected NMI \citep[cNMI][]{lai2016corrected}, and standardized mutual information \citep[SMI;][]{romano2014standardized}. We elide the proofs because they are similar to \autoref{thm:homogeneity}. Each of the six measures satisfies the precondition for the No Free Lunch theorem when the random model matches the problem domain. 

Of late, a renewed push has advocated using the adjusted Rand index \citep[ARI;][]{hubert1985comparing} to evaluate community detection; in fact, ARI and AMI are specializations of the same underlying function which uses \emph{generalized} information-theoretic measures~\citep{romano2016adjusting}. Every claim in the proof works for ARI, by replacing every mutual information \(I\) term with the Rand index \(\mathrm{RI}\).

Another line of research, focusing on improving NMI, produced rNMI \cite{zhang2015evaluating}, rrNMI \citep{zhang2015relationship}, and cNMI \citep{lai2016corrected}. We note that rrNMI is identical to one-sided AMI when both are extended to \modelone{all}. Consequently, our claim above works just as well for rrNMI. Further, because we were able to ignore the denominator of AMI in our proof of \autoref{thm:homogeneity}, we can do the same for rrNMI, which gives its unnormalized variant, rNMI. This means that rNMI is a generalizer-independent measure as well, when used in the appropriate one-sided random model. 
The practical benefit of normalizing rNMI into rrNMI is that the normalized measure gives a more interpretable notion of success. 

Additionally, \autoref{thm:homogeneity} holds true for standardized mutual information (which is equivalent to standardized variation of information and standardized V-measure) \citep{romano2014standardized}, the adjusted variation of information \citep{vinh2009information}, and for Cohen's $\kappa$, advocated for CD by \citet{liu2018evaluation}. This is because each measure shares the form of AMI: an observed score minus an expectation.

Finally, to show whether cNMI is generalizer-independent under the correct random model, we must show how to specialize it into a one-sided variant, because there is room for interpretation about how this should be done, even restricting our focus to \modelone{all}. The expression for cNMI
\begin{equation}
\cNMI(\pred, \true) \triangleq 
{
\frac%
	{2\NMI(\pred, \true) \mathcolor{blue}{- \Expect_{\pred'}\left[ \NMI(\pred', \true) \right] - \Expect_{\true'}\left[ \NMI(\pred, \true') \right] }}%
	{2 \mathcolor{blue}{- \Expect_{\pred'}\left[ \NMI(\pred', \pred) \right] - \Expect_{\true'}\left[ \NMI(\true, \true') \right] }}
}
\end{equation}
depends on both \pred{} and \true{} relative to the universes that contain them. Our specialization should remove dependence on the family of \true, so we arrive at the following expression after cancellation and noting that the NMI between a clustering and itself is 1:
\begin{equation}
\cNMI(\pred, \true) =
{
\frac%
	{\NMI(\pred, \true) \mathcolor{blue}{- \Expect_{\pred'}\left[ \NMI(\pred', \true) \right]}}%
	{1 \mathcolor{blue}{- \Expect_{\pred'}\left[ \NMI(\pred', \pred) \right]}}
}
\end{equation}
As it turns out, this quasi-adjusted measure is also generalizer-independent. 

In general, we now have a recipe for generalizer-independent loss functions: They can be created by subtracting the expected score from the observed score. This recipe works whenever a uniform expectation can be well defined.

\section{Conclusion}
We now have a proof of the No Free Lunch theorem for community detection and clustering that is both complete and exact.
We show that a corrected form of AMI, namely  \(\AMI_{\mathrm{all}}^1\), computes its expectation in a way that does not advantage the boundary partitions ($1$ cluster and $N$ singleton clusters). Indeed, this expectation is over the entire universe of partitions \(\Omega\), rather than any proper subset, such as the historically common \model{perm}.
%
We affirm the claim: \say{Any subset of problems for which an algorithm outperforms others is balanced by another subset for which the algorithm underperforms others. Thus, there is no single community detection algorithm that is best overall} \citep{peel2017ground}.

It is still possible for an algorithm to perform better on a \emph{subset} of community detection problems, so we can strive toward improved results on such a subset.
To hope to perform well, we must note the assumptions about the subset of problems we expect to encounter.
Some work has been done on estimating network properties to select the correct algorithm for the task at hand---a coarse way of checking assumptions \citep{peel2011estimating, yang2016comparative}. Beyond this, though, we must clarify what the problem of community detection \emph{is}; the formulation we choose will guide which subset of problem instances to prioritize and which to sacrifice.

\section*{Acknowledgments}
  The authors thank, alphabetically by surname, Daniel Larremore, Leto Peel, David Wolpert, Patrick Xia, and Jean-Gabriel Young for discussions that improved the work. Any mistakes are the authors' alone.

\bibliographystyle{acl_natbib}
\bibliography{nfl-thm}
\end{document}